 \newtheorem{notation}{Notation}
\DeclareMathAlphabet{\mathsl}{OT1}{cmr}{m}{sl}
\newcommand\eg{\hbox{\textit{e.g.}}}
\newcommand\ie{\hbox{\textit{i.e.}}}
\newcommand{\false}{\texttt{f}}
\newcommand{\true}{\texttt{t}}
\newcommand{\defsymboldelta}{\stackrel{\Delta}  {=}}
\newcommand\Fifex[6]{#1;#2\mathrel{\buildrel #3\over{\hbox to #6pt{\rightarrowfill}}}#4;#5}
\newcommand\ra\rightarrow
\newcommand\tensor\otimes
\newcommand{\gdash}{\vdash\kern -4pt\vdash}
\newdimen\PLyy\newdimen\PLX
\newbox\PLdot \setbox\PLdot\hbox{\tiny.} \def\scl{.08} 
\def\PLot#1{\PLx`#1\advance\PLx-42\PLy\PLx\PLv\PLx\divide\PLy9\PLw\PLy\multiply
\PLw9\advance\PLx-\PLw\advance\PLx-4\PLy-\PLy\advance\PLy4\PLX=\the\PLx pt
\advance\PLyy\the\PLy pt\wd\PLdot=\scl\PLX\raise\scl\PLyy\copy\PLdot}
\def\draw#1{\ifx#1\end\let\next=\relax\else\PLot#1\let\next=\draw\fi\next}
\def\invamp{\hbox{\PLyy=70pt\draw :::;DMV_gqppyyyyyooooxxxnnwvlutkjaWNE=5-./9%
9:::CCCC:::99/..--544=EENWWaajjjkktttttttNNNVVVVVVVV\end \hskip4pt}}
\newbox\iabox\setbox\iabox\invamp
\long\def\hide#1\endhide{}
\DeclareMathAlphabet{\mathsl}{OT1}{cmr}{m}{sl}
\newcommand{\rTell}{\rm R_{TELL}}
\newcommand{\rTellCS}{\rm R'_{TELL}}
\newcommand{\rSum}{\rm R_{SUM}}
\newcommand{\rSumCS}{\rm R'_{SUM}}
\newcommand{\rBang}{\rm R_{!}}
\newcommand{\rLocal}{\rm R_{LOC}}
\newcommand{\rLocalCS}{\rm R'_{LOC}}
\newcommand{\rUnless}{\rm R_{Un}}
\newcommand{\rObserv}{\rm R_{Obs}}
\newcommand{\rCall}{\rm R_{CALL}}
\newcommand{\rCallCS}{\rm R'_{CALL}}
\newcommand{\rEquiv}{\rm R_{EQUIV}}
\newcommand{\vx}{\overline{x}}
\newcommand{\vy}{\overline{y}}
\newcommand{\Barb}[2]{#1\Downarrow_{#2}}
\newcommand{\BarbC}[2]{#1\downdownarrows_{#2}}
\def\cC{\mathcal{C}}
\def\cD{\mathcal{D}}
\newcommand{\fv}{{\it fv}}
\newcommand{\bv}{{\it bv}} 
\newcommand{\tellp}[1]{\mathbf{tell}(#1)}
\newcommand{\whenp}[2]{\mathbf{ask} \  (#1) \  \mathbf{then} \ #2}
\newcommand{\nextp}[1]{ \mathbf{next}\ #1}
\newcommand{\unlessp}[2]{\mathbf{unless} \  (#1) \  \nextp{#2}}
\newcommand{\localp}[2]{(\mathbf{local} \, #1)  \, #2}
\newcommand{\bangp}[1]{! #1}
\newcommand{\skipp}{\mathbf{skip}}
\newcommand{\defsymbol}{\stackrel{\Delta}  {=}}
\newcommand{\TCC}{\texttt{tcc}}
\newcommand{\ccp}{CCP}
\newcommand{\equivP}{\cong}
\newcommand{\entails}{\models}
\long\def\comment#1{}
\newcommand{\redi}{\longrightarrow}
\newcommand{\redirex}{\longrightarrow^{*}}
\newcommand{\rede}[1]{\stackrel{\,\,#1\,\,}{\,\,
=\hspace{-0.1cm}=\hspace{-0.1cm}=\hspace{-0.1cm}\Longrightarrow}} 
\newcommand{\rediIdx}[1]{\stackrel{\,\,\,[#1]_{}\,\,}{\,\, \longrightarrow}} 
\newcommand{\rediIdxJ}[2]{\xrightarrow{[#1]_{#2}}}
\newcommand{\lub}{\it lub}
\newcommand{\equivC}{\cong}
\newcommand{\sliced}{\bullet}
\newcommand{\startc}{\texttt{start}}
\newcommand{\stopc}{\texttt{stop}}
\newcommand{\beatc}{\texttt{beat}}
\newcommand{\past}[1]{\circleddash#1}
\newcommand{\fromCollect}[1]{\lceil #1 \rceil}
\newcommand{\toCollect}[1]{\lfloor #1 \rfloor}
\newcommand{\appr}{\preceq^{\sharp}}
\newcommand{\idxP}[1]{\!:\!#1}
\begin{document}
\long\def\comment#1{}


\title{Slicing Concurrent Constraint Programs}

\author{ M. Falaschi\inst{1} ~~~ M. Gabbrielli\inst{2} ~~~ C. Olarte\inst{3} ~~~ C. Palamidessi \inst{4}}
\institute{Dipartimento di Ingegneria dell'Informazione e Scienze Matematiche\\
              Universit\`a di Siena, Italy.\\
		      \email{moreno.falaschi@unisi.it.} \and
	 Dipartimento di Informatica - Scienza e Ingegneria, 
	 Universit\`a di Bologna, Italy.
	 \email{gabbri@cs.unibo.it.}
	 \and
	 ECT, Universidade Federal do Rio Grande do Norte, Brazil\\
	 \email{carlos.olarte@gmail.com.}
	 \and
	 INRIA and LIX, \'{E}cole Polytechnique, France.\\
	 \email{catuscia@lix.polytechnique.fr.}
	   }


	   \date{}
\maketitle


\makeatletter{}
\begin{abstract}

Concurrent Constraint Programming (\ccp) is a declarative model for
concurrency where agents interact by telling and asking constraints
(pieces of information) in a shared store.  Some previous works have
developed (approximated) 
declarative debuggers for CCP languages. However, the task of
debugging 
concurrent programs remains difficult. 
In this paper we define a dynamic slicer for \ccp\ and we show it to
be a useful companion tool for the existing debugging techniques. 
We start with a partial computation (a trace) that 
shows the presence of bugs. Often, the quantity of information in such
a trace is
overwhelming, and the user gets easily lost, since she cannot
focus on the sources of the bugs. Our slicer allows for marking part
of the 
state of the computation and assists the user to eliminate most of
the redundant 
information in order to highlight the errors. We show that this
technique can be 
tailored to timed variants of CCP.  We also develop a prototypical 
implementation freely available for making experiments. 
\end{abstract}

 \begin{keywords}
  Concurrent Constraint Programming,
 Program slicing, Debugging. 
 \end{keywords}

\makeatletter{}
\section{Introduction}\label{sectionintroduction}

Concurrent constraint programming (CCP)  \cite{cp-book,saraswat91popl}  
(see a survey in 
\cite{DBLP:journals/constraints/OlarteRV13}) 
  combines   concurrency primitives with the ability 
to deal with constraints, and hence, with partial information. 
The notion of concurrency 
is based upon the shared-variables communication model.
CCP is intended for reasoning, modeling and 
programming 
concurrent  agents (or processes) that interact with each other and
their environment by posting and asking information in a medium, 
a so-called store.
Agents in  CCP can be seen
as both computing processes (behavioral style) and as logic formulae
(declarative style). 
Hence CCP can exploit reasoning techniques from both 
process calculi and logic. 

CCP is a very flexible model and then,  it
has been applied to an increasing number of different 
fields such as 
probabilistic and stochastic \cite{bortolussi}, timed 
\cite{DBLP:journals/jsc/SaraswatJG96,NPV02,DBLP:journals/iandc/BoerGM00} and mobile  \cite{Olarte:08:SAC} systems. More recently, \ccp\ languages have been proposed for the specification of spatial and epistemic behaviors as in, e.g.,  
social networks 
 \cite{DBLP:conf/concur/KnightPPV12,DBLP:journals/tcs/OlartePN15}. 

One crucial problem when working with a concurrent language is being 
able to provide tools to debug  programs. This is particularly 
useful for a language in which a program can 
generate a 
large number of parallel running agents. 
In order to tame this complexity, abstract interpretation
techniques 
 have been 
considered  (e.g. in 
\cite{CFM94,CominiTV11absdiag,DBLP:journals/tplp/FalaschiOP15}) as well as  
(abstract) declarative
debuggers   following the seminal work of 
Shapiro \cite{Shapiro83}.
However, these techniques 
are approximated (case of abstract interpretation) or it can be difficult to apply   them when dealing with complex programs (case of 
declarative debugging). It would be 
useful to have a semi automatic tool able to interact with 
the user and filter, in a given computation,  the information which is 
relevant to a particular observation or result.
In other words, the programmer could mark the outcome that she is 
interested to check in a particular  computation that she 
suspects to be wrong. Then, a corresponding 
depurated partial computation is obtained automatically, 
where only the information relevant to
the marked parts is present.

Slicing was introduced in some pioneer works by Mark Weiser \cite{MW84}. 
It was originally defined as a static technique, independent of  any particular input of the program. 
Then, the technique was extended by introducing the so called 
dynamic program slicing \cite{KL88}. This technique
is useful for simplifying the debugging process, by selecting a portion of
the program containing 
the faulty code.
Dynamic program slicing has been applied
to several programming paradigms, for instance to imperative 
programming \cite{KL88}, functional programming 
\cite{OSV08}, 
Term Rewriting \cite{ABER11},
and functional logic programming \cite{ABFR14}. 
The reader may refer to \cite{Silva2012} for a  survey. 
%
%

In this paper we present
the first formal framework for CCP 
dynamic slicing and show, by some working examples and a  prototypical tool, 
the main features of this approach. Our aim is to  help  the
programmer to debug her program, in cases where she could not find
the bugs by using other debuggers.
We proceed with three main steps.
First we extend the standard operational semantics of \ccp\ 
to a  ``collecting semantics'' that adds the needed information for  the slicer. 
Second, we propose several analyses of the faulty 
situation based on error symptoms, 
including causality, variable dependencies, unexpected behaviors and 
store inconsistencies. Thirdly, 
we define a marking algorithm of the redundant items and define a trace slice. 
Our algorithm is flexible and it can 
deal with different variants of CCP. 
In particular, we show  how to apply it to 
timed extensions of \ccp\ \cite{DBLP:journals/jsc/SaraswatJG96}.

\emph{Organization. } Section \ref{sectionccp} 
describes CCP and its  operational semantics.
In Section \ref{sectionslicing} we introduce  a slicing technique for CCP. In Section   \ref{sectionapplications} we extend our method to consider timed  CCP programs. We  present a working 
prototypical implementation of the slicer available at \url{http://subsell.logic.at/slicer/}. We describe an example  using the slicer to debug a multimedia interacting system programmed in timed CCP. 
Due to lack of space, other examples are given only in the web page of the tool as, for instance, a biochemical system specified in timed CCP. Finally, Section \ref{sectionconclusions} concludes.

\makeatletter{}

\section{Concurrent Constraint Programming}\label{sectionccp}

 
Processes in \ccp\ \emph{interact} with each other by \emph{telling} and \emph{asking} 
constraints (pieces of information) in a common 
store of partial information. The type of constraints
 is not fixed but parametric in a constraint system (CS). 
 Intuitively, a
CS  provides a signature from which  constraints
can be built from basic tokens (e.g., predicate symbols), and two basic operations: conjunction ($\sqcup$) and variable hiding ($\exists$). The CS defines also an
\emph{entailment} relation ($\entails$) specifying inter-dependencies
between constraints:  $c\entails d$  means that the
information $d$ can be deduced from the information 
$c$. 
Such systems  can be formalized as a Scott information system as in \cite{saraswat91popl},
as  cylindric algebras \cite{BoerPP95}, or they can be built upon a suitable fragment of logic \eg, as in \cite{NPV02}. 
Here we 
follow \cite{BoerPP95}, since the other approaches 
can be seen as an instance of this definition. 

\begin{definition}[Constraint System --CS--] \label{def:cs}
A cylindric constraint system  is a structure 
 $
{\bf C} = \langle \cC,\leq,\sqcup,\true,\false,{\it Var}, \exists,
D \rangle
$ s.t. 
\\\noindent{-}
 $\langle \cC,\leq,\sqcup,\true,\false \rangle$ is a complete 
 algebraic lattice
with $\sqcup$ the $\lub$ operation (representing 
\emph{conjunction}).
Elements in $\cC$ are called \emph{constraints}
with typical elements $c,c',d,d'...$, 
and $\true$, $\false$ the least and the greatest
elements. 
If $c\leq d$, we say that $d$ entails $c$ and we write $d \entails c$. 
If $c\leq d$ and $d\leq c$ we write $c \cong d$. 
\\\noindent{-}${\it Var}$ is a denumerable set of variables and for each
$x\in {\it Var}$ the function $\exists x: \cC \to \cC$ is a
cylindrification operator satisfying:
		    (1) $\exists x (c) \leq c$. 
		(2) If $c\leq d$ then $\exists x (c) \leq \exists x (d)$.
		(3) $\exists x(c \sqcup \exists x (d)) \equivC \exists x(c) \sqcup \exists x(d)$.
		(4) $\exists x\exists y(c) \equivC \exists y\exists x (c)$.
		(5) For an increasing chain $c_1 \leq c_2 \leq c_3...$, $ \exists x
\bigsqcup_i c_i \equivC \bigsqcup_i \exists x ( c_i) $.
\\\noindent{-} For each $x,y \in {\it Var}$, the constraint $d_{xy} \in D$ is a
\emph{diagonal element} and it satisfies:
			(1) $d_{xx} \equivC \true$.
		(2) If $z$ is different from $x,y$ then $d_{xy} \equivC \exists z(d_{xz}
\sqcup d_{zy})$.
		(3) If $x$ is different from $y$ then $c \leq d_{xy} \sqcup
\exists x(c\sqcup d_{xy})$.
\end{definition}
The cylindrification operator models a sort of existential
quantification  for hiding information. As usual, $\exists x. c$ binds $x$ in $c$. We use $\fv(c)$ (resp. $\bv(c)$) to denote the set of free (resp. bound) variables in $c$. 
The diagonal element $d_{xy}$ can be thought of as the equality $x=y$, 
useful to define  substitutions of the form $[t/x]$ (see the details, e.g., 
in \cite{DBLP:journals/tplp/FalaschiOP15}). 

 As an example, consider the finite domain constraint system  (FD) \cite{HentenryckSD98}. This system assumes variables to range over finite domains and, in addition to equality, one may have predicates  that restrict the possible values of a variable as in  $x<42$. 
%
%

\subsection{The language of \ccp\ processes}
In the spirit of process calculi, the language of processes in   \ccp\  is given by a small number of primitive operators or combinators as described below. 
\begin{definition}[Syntax of Indeterminate \ccp\ \cite{saraswat91popl}]\label{def:syntax-lcc} 
 Processes in \ccp\  are built from constraints in the underlying constraint system and  the syntax:

$
P,Q ::= \skipp\mid \tellp{c} \mid   
\sum\limits_{i\in I}\whenp{c_i}{P_i} \mid   
P \parallel Q  \mid 
\localp{x}{P}   \mid  
p(\overline{x})
$
\end{definition}

The process $\skipp$ represents inaction. The process 
 $\tellp{c}$ adds $c$ to the current store $d$ producing the new store $c\sqcup d$.
 Given a non-empty finite set of indexes $I$, the process $\sum\limits_{i\in I}\whenp{c_i}{P_i}$ non-deterministically chooses  $P_k$ for execution if the   store entails $c_k$. The chosen alternative, if any, precludes the others.
This provides a powerful synchronization mechanism based on constraint entailment.   
 When $I$ is a singleton, we shall omit the ``$\sum$'' and we simply write $\whenp{c}{P}$. 
 
The process $P\parallel Q$ represents the parallel (interleaved) execution of $P$ and $Q$. 
The process 
$\localp {x}{P}$ behaves as $P$ and binds the variable 
$x$ to be local to it. We  use 
$\fv(P)$ (resp. $\bv(P)$) to denote  the set of free (resp. bound) variables in 
$P$. 

Given a process definition  $p(\overline{y}) \defsymboldelta P$,  
where all free variables of $P$ are in the set of pairwise distinct
variables $\overline{y}$, the process $p(\overline{x})$  evolves into 
$P[\overline{x}/\overline{y}]$.  A \ccp\ program takes the form 
$\cD.P$ where $\cD$ is  a set of  process definitions and $P$ is a process.

The Structural Operational Semantics (SOS)   of \ccp\ is given by the transition relation $ \gamma \redi \gamma'$  
satisfying the rules in Fig. \ref{fig:sos}.
Here we follow the formulation 
 in \cite{fages01ic} where the local variables created by the program appear explicitly in the transition system and parallel composition of agents is identified to a multiset of agents. 
More precisely, a \emph{configuration} $\gamma$ is a triple of the  form  
$(X;  \Gamma ;  c)$, where $c$ is a constraint representing the  store,  $\Gamma$ is a multiset of processes,
and $X$ is a set of hidden 
(local) variables of $c$ and $\Gamma$. The multiset $\Gamma=P_1,P_2,\ldots,P_n$  
represents the process  $P_1 \parallel P_2 \parallel \cdots \parallel P_n$. We shall indistinguishably
use both notations to denote parallel composition. Moreover, processes  are quotiented by a structural congruence relation $\equivP$  satisfying: 
 (STR1) $P \equivP Q$ if they differ only by a renaming of bound variables
   (alpha conversion);
 (STR2) $P\parallel Q \equivP Q \parallel P$;
 (STR3) $P \parallel (Q \parallel R) \equivP (P \parallel Q)  \parallel R$; (STR4) 
 $P \parallel \skipp \equivP P$.

Let us briefly explain the rules  in Figure \ref{fig:sos}. A tell agent $\tellp{c}$ adds $c$ to the current store $d$ (Rule $\rTell$); 
the process $\sum\limits_{i\in I}\whenp{c_i}{P_i}$ executes $P_k$ if its corresponding guard $c_k$ 
can be entailed from the store (Rule $\rSum$); a local process $\localp{x}{P}$ adds $x$ to the set of hidden variable $X$ when no clashes of variables occur (Rule $\rLocal$). Observe that Rule $\rEquiv$ can be used   to do alpha conversion if the premise 
of $\rLocal$
 cannot be satisfied; the call $p(\vx)$ executes the body of the process definition (Rule $\rCall$).

\begin{definition}[Observables]\label{def:obs}
Let $\redirex$ denote the  reflexive and transitive closure of $\redi$. If 
$(X;\Gamma; d) \redirex(X';\Gamma';d')$ and 
$\exists X'. d' \entails c$ we write 
$\Barb{(X;\Gamma;d)}{c}$.
If $X=\emptyset$ and $d=\true$   we simply write   $\Barb{\Gamma}{c}$.
\end{definition}

Intuitively, if $P$ is a process then 
$\Barb{P}{c}$ says that $P$ can reach a store $d$ strong enough to entail $c$, \ie, $c$ is an output of $P$. Note that the variables in $X'$ above are hidden from $d'$  since the information about  them is not observable.


\begin{figure}[t]
\resizebox{.92\textwidth}{!}
{
$
\begin{array}{ccc}
\infer[\rTell]
{(X; \tellp{c},\Gamma;d) \redi (X;\skipp, \Gamma;c\sqcup d)}
{}
\qquad
\infer[\rSum]{
  (X;\sum\limits_{i\in I}\whenp{c_i}{P_i},\Gamma;d) \redi (X;P_k,\Gamma;d)
  }
  {
  d \entails c_k \quad k\in I
  }
 \\\\
\infer[\rLocal]
{(X;\localp{x}{P},\Gamma;d) 
\redi (X\cup\{x\};P,\Gamma;d)
}
{x \notin X \cup fv(d) \cup fv(\Gamma)}
\qquad
\infer[\rCall]
{(X;p(\vx),\Gamma;d) \redi  (X;P[\vx/\vy],\Gamma;d) }
{p(\vy) \defsymboldelta P  \in \cD } \\\\
\infer[\rEquiv]
{(X;\Gamma;c) \redi (Y;\Delta;d)}
{(X;\Gamma;c) \equivP (X';\Gamma';c')  \redi (Y';\Delta';d') \equivP (Y;\Delta;d)}
\end{array}
$
}
\caption{Operational semantics for \ccp\ calculi\label{fig:sos}}
\end{figure}

\makeatletter{}
\section{Slicing a CCP program}\label{sectionslicing}

Dynamic slicing is a technique that helps the user to debug her program  by 
simplifying a partial execution trace, thus depurating it from parts which
are irrelevant to find the bug. 
It can also help to highlight parts of the programs which have 
been wrongly ignored by the execution of a wrong piece of code.

Our slicing technique  consists of three main steps:
\begin{enumerate}
 \item[{\bf S1}] \emph{Generating a (finite) trace} 
 of the program. For that, we propose a \emph{collecting semantics}  that generates the (meta) information needed for the slicer. 
 \item[{\bf S2}] \emph{Marking the final store}, to choose some of the constraints that, according to the symptoms detected, should or should not be in the final store. 
 \item[{\bf S3}] \emph{Computing the trace slice}, to select the processes and constraints  that were relevant to produce the (marked) final store. 
\end{enumerate}

\subsection{Collecting Semantics (Step ${\bf S1}$)}\label{subsec:collect}
The slicer we propose requires some extra information  from the execution of the processes. More precisely,    (1) in each operational step $\gamma \to \gamma'$, we need to highlight the process that was reduced; and (2) the constraints accumulated in the store must reflect, exactly, the contribution of each process to the store. 

In order to solve (1) and (2), 
we propose a collecting semantics that extracts the needed meta information 
 for the slicer. The  rules are in Figure \ref{fig:colsem} and explained 
below. 

\begin{figure}[t]
\begin{center}
\resizebox{.55\textwidth}{!}
{
$
\begin{array}{ccc}
\infer[\rTellCS]
{(X; \Gamma, \tellp{c}\idxP{i},\Gamma';S) \rediIdxJ{i}{} (X \cup Y;\Gamma,\Gamma'; S \cup S_c)}
{\langle Y, S_c\rangle = atoms(c, {\it fvars})}
\\\\
\infer[\rSumCS]{
  (X;\Gamma, \sum\limits_{l\in I}\whenp{c_l}{P_l}\idxP{i},\Gamma';S) \rediIdxJ{i}{k} (X;\Gamma,P_k\idxP{j},\Gamma';S)
  }
  {
  \bigsqcup\limits_{d\in S} d \entails c_k \quad k\in I
  }
 \\\\
\infer[\rLocalCS]
{(X;\Gamma, \localp{x}{P}\idxP{i},\Gamma';S) 
\rediIdxJ{i}{} (X\cup\{x'\};\Gamma,P[x'/x]\idxP{j},\Gamma';S)
}
{x' \in Var \setminus {\it fvars}}
\\\\
\infer[\rCallCS]
{(X;\Gamma, p(\vx)\idxP{i},\Gamma';S) \rediIdxJ{i}{}  (X;\Gamma, P[\vx/\vy]\idxP{j},\Gamma';S) }
{p(\vy) \defsymboldelta P  \in \cD } \\\\
\end{array}
$
}
\end{center}
\caption{Collecting semantics for \ccp\ calculi. $\Gamma$ and $\Gamma'$ are (possibly empty) sequences of processes.
${\it fvars} = X\cup fv(S) \cup fv(\Gamma) \cup fv(\Gamma')$. In ``:$j$'', $j$ is a fresh identifier. 
\label{fig:colsem}}
\end{figure}

The semantics considers configurations of the shape $(X ; \Gamma ; S)$
where $X$ is  a set of hidden variables, $\Gamma$ is a 
sequence of processes with \emph{identifiers} and $S$ is a set of atomic constraints. 
Let us explain the last two components. 
We identify  the  parallel composition 
 $Q = P_1 \parallel  \cdots \parallel P_n$ 
 with the   \emph{sequence}  $\Gamma_{Q} =P_1\idxP{i_1}, \cdots, P_n\idxP{i_n}$
 where $i_j   \in \mathbb{N}$ is a unique identifier for $P_j$.  Abusing of the notation, we usually write $Q\idxP{i}$ instead of $\Gamma_Q$ when the indexes in the parallel composition are unimportant. Moreover, we shall use $\epsilon$ to denote an empty sequence of processes. 
The  context $\Gamma, P\idxP{i},\Gamma'$ represents   that $P$ is preceded and followed, respectively, 
by the (possibly empty) sequences of processes $\Gamma$ and $\Gamma'$.  The use of indexes will allow us to distinguish, e.g.,   the three different  occurrences of $P$ in   
  ``$\Gamma_1,P\idxP{i},\Gamma_2,P\idxP{j}, (\whenp{c}{P})\idxP{k}$''. 

Transitions are labeled  with $\rediIdxJ{i}{k}$ where $i$ is the identifier of the reduced process  and $k$ can be either 
$\bot$ (undefined) or a natural number indicating the branch chosen in a non-deterministic 
choice (Rule $\rSumCS$).  In each rule, the resulting process
has a new/fresh identifier (see e.g., $j$ in Rule $\rLocalCS$). This new identifier can be obtained, e.g., as the  successor of the  maximal identifier in the previous configuration.  
For the sake of readability,  we write $[i]$ instead of $[i]_\bot$. Moreover, we shall avoid the identifier ``$\ \idxP{i}$'' when it can be inferred from the context. 

\noindent\emph{Stores and Configurations.} The solution for (2) amounts to consider the store, 
in  a configuration,   as a set of (atomic) constraints and not as a constraint. 
Then, the store $\{c_1,\cdots,c_n\}$  represents the constraint $c_1 \sqcup \cdots \sqcup c_n$.

Consider the process  $\tellp{c}$ and let $V \subseteq Vars$.
 The Rule $\rTellCS$ first decomposes the constraint $c$ in  its atoms. For that, assume that the bound variables in $c$ are all distinct and not in $V$ (otherwise, by alpha conversion, we can find $c' \equivC c$ satisfying such condition). We define $atoms(c, V) = \langle bv(c), basic(c)\rangle$
where 

$
basic(c) = \left\{
\begin{array}{l}
c \mbox{ if $c$ is an atom, $\true$, $\false$ \mbox{ or }  $d_{xy}$}\\
basic(c') \mbox{ if } c= \exists x. c' \\
basic(c_1) \cup basic(c_2) \mbox{ if } c= c_1 \sqcup c_2 \\
\end{array}
\right.
$

Observe that in Rule $\rTellCS$, the parameter 
$V$ of the function $atoms$ is the set of free variables occurring in the context, i.e.,  ${\it fvars}$ in Figure \ref{fig:colsem}. This is needed  to perform alpha conversion
of $c$  (which is left implicit in the definition of 
$basic(\cdot)$) to satisfy the above condition on bound names. 

Rule $\rSumCS$ signals the number of the branch $k$ chosen for execution. 
Rule $\rLocalCS$ chooses a fresh variable $x'$, i.e., a variable not in the set of free variables of the configuration (${\it fvars}$).  Hence, we execute the process $P[x'/x]$ and add $x'$ to the set $X$ of local variables. 
Rule $\rCallCS$ is self-explanatory. 

It is worth noticing that we do not consider a rule for structural congruence in the collecting semantics. Such rule, in the system of Figure \ref{fig:sos}, played different roles. Axioms STR2 and  STR3  provide agents with a structure of multiset (commutative and associative). As mentioned above, we consider in the collecting semantics sequences of processes to highlight the process that was reduced in a transition. The sequence $\Gamma$ in Figure \ref{fig:colsem} can be of arbitrary length and then, any of the enabled processes in the sequence can be picked for execution. Axiom STR1 allowed us to perform alpha-conversion on processes. This is needed in  $\rLocal$ to avoid clash of variables. Note that the new Rule $\rLocalCS$ internalizes such procedure by picking a fresh variable $x'$. Finally, Axiom STR4 can be used to simplify $\skipp$ processes that can be introduced, e.g., by a $\rTell$ transition. Observe that the collecting semantics  does not add any  $\skipp$ into the configuration (see Rule $\rTellCS$).

\begin{example}\label{ex:cs}
Consider the following toy example. Let $\mathcal{D}$ contain the process definition 
$A  \defsymboldelta  \tellp{z>x+4} 
$
and
$\mathcal{D}.P$ be a program where 

$
P=\tellp{y<7}  \parallel \whenp{x<0}{A} \parallel \tellp{x=-3}
$. The following is a possible trace generated by the collecting semantics. \\

${\scriptsize \begin{array}{ll}
\qquad \ \  (\emptyset ; \tellp{y<7}\idxP{1},  \whenp{x<0}{A}\idxP{2},\tellp{x=-3}\idxP{3}  ; \true)  \\ \rediIdx{1}    (\emptyset ; \whenp{x<0}{A}\idxP{2}, \tellp{x=-3}\idxP{3} ; y<7) \\
\rediIdx{3}  (\emptyset ;  \whenp{x<0}{A}\idxP{2} ; y<7,x=-3)  
\rediIdxJ{2}{1}  (\emptyset ; {A}\idxP{4} ; y<7,x=-3) \\
\rediIdx{4}  (\emptyset ; \tellp{z>x+4}\idxP{5} ; y<7,x=-3)  
\rediIdx{5}    (\emptyset ; \epsilon ; y<7,x=-3, z>x+4) 
\end{array}
}
$

\end{example}

Now we introduce the notion of observables for the collecting semantics and we show that it coincides with that of Definition \ref{def:obs} for the  operational semantics. 

\begin{definition}[Observables Collecting Semantics]\label{def:obs2}
We write 
$\gamma \rediIdxJ{i_1,...,i_n}{k_1,...,k_n} \gamma'$  whenever
 $
\gamma = (X_0;\Gamma_0;S_0) \rediIdxJ{i_1}{k_1} \cdots 
\rediIdxJ{i_n}{k_n} (X_n;\Gamma_n;S_n) = \gamma'
$. Moreover, if $\exists X_n. \bigsqcup\limits_{d\in S_n}d \entails c$, then we write $\BarbC{\gamma}{c}$. 
If $X_0=S_0=\emptyset$,   we simply write   $\BarbC{\Gamma_0}{c}$.


\end{definition}
\begin{theorem}[Adequacy]
For any process $P$, constraint $c$ and $i \in \mathbb{N}$, $\Barb{P}{c}$ iff $\BarbC{P\idxP{i}}{c}$ 
\end{theorem}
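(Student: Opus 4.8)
The claim is a two-way correspondence between the observables of the operational semantics ($\Barb{P}{c}$) and of the collecting semantics ($\BarbC{P\idxP{i}}{c}$). The natural strategy is to show that the two transition systems \emph{simulate} each other step-by-step, modulo the obvious translation between configurations: a collecting configuration $(X;\Gamma;S)$ corresponds to the operational configuration $(X;\Gamma^\flat;\bigsqcup_{d\in S} d)$, where $\Gamma^\flat$ forgets process identifiers and reinserts $\skipp$ wherever a tell has been consumed. I would first make this translation precise and check it is surjective (every operational configuration reachable from $(\emptyset;P;\true)$ is the image of some collecting configuration) and that it respects the barb conditions: $\exists X.\,\bigsqcup_{d\in S}d \entails c$ is literally the condition defining $\BarbC{\cdot}{c}$, and it matches the condition $\exists X'.d'\entails c$ from Definition~\ref{def:obs} once we know $\bigsqcup_{d\in S}d \cong d'$.

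\textbf{Forward direction ($\Rightarrow$).} I would prove by induction on the length of the operational derivation $(\emptyset;P;\true)\redirex(X';\Gamma';d')$ that there is a corresponding collecting derivation $(\emptyset;P\idxP{i};\emptyset)\rediIdxJ{\vec\imath}{\vec k}(X';\Gamma'';S')$ with $\bigsqcup_{d\in S'}d\cong d'$ and $\Gamma''$ mapping onto $\Gamma'$ under $\flat$. The base case is trivial. For the step, I case-split on the last operational rule. The $\rTell$, $\rSum$, $\rLocal$, $\rCall$ cases each have a matching collecting rule; the main thing to verify is that the collecting rule's side conditions are met — e.g., for $\rTellCS$ one needs $\langle Y,S_c\rangle = atoms(c,{\it fvars})$ and the key lemma that $\bigsqcup_{d\in S_c}d \cong \exists Y.\,c$ (i.e.\ $basic$ followed by $\bigsqcup$ recovers $c$ up to the freshly-renamed bound variables, which get absorbed into the hidden-variable set $X$). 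For $\rSum$, the guard test $d\entails c_k$ translates to $\bigsqcup_{d\in S}d\entails c_k$ using the induction hypothesis $\bigsqcup_{d\in S}d\cong d$. The $\rEquiv$ steps require a small argument: STR1 (alpha-renaming) is internalized by the freshness choices in $\rLocalCS$/$\rTellCS$, STR2--STR3 (multiset structure) are subsumed because $\Gamma$ in Figure~\ref{fig:colsem} is an arbitrary sequence so any enabled process can be selected, and STR4 ($P\parallel\skipp\equiv P$) is handled because the collecting semantics simply never inserts $\skipp$. This "$\rEquiv$ is absorbed" argument is the part I expect to be the main obstacle — it is conceptually routine but needs care to state cleanly, ideally as a separate lemma saying that any operational derivation can be normalized so that $\rEquiv$ is used only in the stylized ways above.

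\textbf{Backward direction ($\Leftarrow$).} This is the easier half: each collecting rule is simulated by the corresponding operational rule possibly preceded/followed by $\rEquiv$ steps (to reorder the multiset, to alpha-convert, and to clean up $\skipp$). By induction on the collecting derivation, a run $(\emptyset;P\idxP{i};\emptyset)\rediIdxJ{\vec\imath}{\vec k}(X_n;\Gamma_n;S_n)$ yields an operational run $(\emptyset;P;\true)\redirex(X_n;\Gamma_n^\flat;\bigsqcup_{d\in S_n}d)$, again using the lemma $\bigsqcup_{d\in S_c}d\cong\exists Y.c$ for the tell case. Then $\BarbC{P\idxP{i}}{c}$, i.e.\ $\exists X_n.\bigsqcup_{d\in S_n}d\entails c$, gives exactly $\Barb{P}{c}$. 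Combining the two directions closes the iff. I would package the shared technical content — the translation $\flat$, its barb-preservation, and the $basic$/$\bigsqcup$ lemma — as preliminary lemmas so that both inductions are short case analyses.
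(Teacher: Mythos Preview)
Your approach is essentially the paper's own: a step-by-step simulation in both directions, by induction on derivation length, relying on (i) a translation between multisets-with-store and sequences-with-atom-sets and (ii) the key lemma that $atoms$ recovers the original constraint up to $\equivC$---the paper states this as $d \equivC \fromCollect{S}_Y$ where $\langle Y,S\rangle = atoms(d,V)$, and your discussion of how $\rEquiv$ is absorbed matches the paper's informal remarks preceding the theorem. One small slip: your stated lemma ``$\bigsqcup_{d\in S_c}d \cong \exists Y.\,c$'' has the quantifier on the wrong side---it should read $c \cong \exists Y.\bigsqcup_{d\in S_c} d$ (the bound variables $Y$ are stripped by $basic$ and must be reintroduced over the join, not over $c$), though your parenthetical explanation shows you have the right idea.
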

\begin{proof}(\emph{sketch}) ($\Rightarrow$) 
The proof proceeds by induction on the length of the derivation needed to perform the output $c$ in $\Barb{P}{c}$ and using the following results. 

Given a set of variables $V$, a constraint $d$ and a set of constraints $S$, let us use  $\toCollect{d}_V$ to denote (the resulting tuple) $atoms(d ,V)$ and  $\fromCollect{S}_V$ to denote the constraint $\exists V . \bigsqcup\limits_{c_i \in S} c_i$.  If  $\langle Y, S\rangle = \toCollect{d}_{V}$, from the definition of \emph{atoms},  we have 
$d \equivC \fromCollect{S}_Y$.

Let $\Gamma$ (resp. $\Psi$) be a multiset (resp. sequence) of processes.  Let us use 
$\toCollect{\Gamma}$ to denote  any sequence of processes with distinct identifiers built from the processes in   $\Gamma$ and 
$\fromCollect{\Psi}$ to denote  the multiset built from the processes in  $\Psi$. 
Consider now the transition $\gamma = (X ; \Gamma ; d)\redi(X' ; \Gamma' ; d')$. 
Let $\langle Y, S\rangle = \toCollect{d}_{V}$ where  $V= X \cup \fv(\Gamma) \cup \fv(d)$. 
 By choosing the same process reduced in $\gamma$, we can show that 
 there exist $i,k$ s.t. 
  the collecting semantics  mimics the same transition 
as 
$(X \cup Y, \toCollect{\Gamma}, S) \rediIdxJ{i}{k} (X'\cup Y' ; \toCollect{\Gamma''} ; S')$
 where 
$d' \equivP \fromCollect{S'}_{Y'}$ and  $\Gamma'' \equivP \Gamma'$.

The ($\Leftarrow$)  side follows from similar arguments. 
\end{proof}

%

%
%

\subsection{Marking the Store (Step ${\bf S2}$)}\label{sec:step2}
From the final store the user must indicate the symptoms that are 
relevant to the slice that she wants to recompute. For that, she 
must select a set of constraints that considers relevant to 
identify a bug. Normally, these are constraints at the end of a
partial computation, and there are several strategies that one can 
follow to identify them.

Let us suppose that the final configuration in a partial computation is $(X;\Gamma;S)$. 
The symptoms that something is wrong in the program (in the sense 
that the user identifies some unexpected configuration)
may be (and not limited to) the following:
\begin{enumerate}
\item \emph{Causality:} the user
identifies, according to her knowledge, a subset $S' \subseteq S$ that needs to be explained 
(i.e., we need to identify the processes that produced $S'$). 
\item \emph{Variable Dependencies:} The user may identify a set of variables $V\subseteq \fv(S)$ 
whose constraints need to be explored. Then, one would be interested in marking 
the following set of constraints
\[
S_{sliced} = \{ c \in S \mid vars(c) \cap V \neq \emptyset \}
\]

 \item \emph{Unexpected behaviors}: there is a constraint $c$ entailed from the final 
 store that is not expected  from the  intended behavior of the program. 
 Then, one would be interested in marking the following set of constraints:
 \[
S_{sliced} = \bigcup \{S' \subseteq S \mid  \bigsqcup S' \entails c 
\mbox{ and } S' \mbox{ is set minimal} \}
 \]
where ``$S'$ is set minimal'' means that for any $S'' \subset S'$, $S'' \not\entails c$. 
 \item \emph{Inconsistent output}: The final store should be consistent with respect  
 to a given specification (constraint) $c$, i.e., $S$ in conjunction with $c$ must not 
 be inconsistent. In this case, the set of constraints to be marked is:
 \[
 S_{sliced} = \bigcup \{S' \subseteq S \mid \bigsqcup S' \sqcup c \entails \false 
\mbox{ and } S' \mbox{ is set minimal} \}
 \]
where ``$S'$ is set minimal'' means that for any $S'' \subset S'$, 
$S'' \sqcup c \not\entails \false$.

\end{enumerate}

We note that ``set minimality'', in general,  can be expensive to compute. 
However, we believe that  in some practical cases, as shown in the examples in Section \ref{sec:imp},  this is not so heavy. In any case, we can always use  supersets of the minimal ones which are easier to compute but 
less precise for eliminating useless information. 

%
%
%
%
%
%


\subsection{Trace Slice (Step ${\bf S3}$)}
Starting from the set $S_{sliced}$ above  we can define   
a backward slicing step.
We  shall identify, by means of a backward
evaluation, the set of transitions (in the 
original computation) which are necessary for introducing the elements
in $S_{sliced}$. By doing that, we will 
eliminate information  not related to $S_{sliced}$.

\begin{notation}[Sliced Terms]
We shall use the fresh constant symbol $\sliced$ 
to denote an ``irrelevant'' constraint or process. 
Then, for instance, ``$c\sqcup \sliced$'' results from a constraint 
$c\sqcup d$ where $d$ is irrelevant.  Similarly, 
$\whenp{c}{(P \parallel \sliced)} + \sliced$ results from a process of the 
form $\whenp{c}{(P \parallel Q)} + \sum \whenp{c_l}{P_l}$ where $Q$ 
and the summands  in $\sum \whenp{c_l}{P_l}$ are irrelevant. 
We also assume that a sequence $\sliced, \ldots, \sliced$ with any 
number ($\geq 1$) of occurrences of $\sliced$ is 
equivalent to a single occurrence. 

A replacement is either a pair of the shape $[T / i]$  or $[T / c]$. In the first (resp. second) case, 
 the process with identifier $i$  (resp. constraint $c$) is 
replaced with  $T$.  
We shall use $\theta$ to denote a set of replacements
and we  call these sets  as 
``replacing substitutions''.  The  composition of  replacing substitutions $\theta_1$ and $\theta_2$ 
is given by the set union of   $\theta_1$ and 
$\theta_2$,
and is denoted as $\theta_1 \circ \theta_2$. 
If $\Gamma=P_1\idxP{i_i},...,P_n\idxP{i_n}$, for simplicity, we shall write $[\Gamma / j]$ instead of $[P_1 \parallel \cdots \parallel P_n/j]$. Moreover, we shall write, e.g.,  $\whenp{c}{\Gamma}$ instead of $\whenp{c}{(P1 \parallel \cdots \parallel P_n)}$. 

\end{notation}

 Algorithm  \ref{alg:slicer}  computes the slicing. The last configuration in the 
 sliced trace is $(X_n\cap vars(S); \sliced; S)$. 
 This means that we only observe the local variables of interest, 
 i.e., those in $vars(S)$. Moreover, note that  the processes in the 
 last configuration were not executed and then, 
 they are irrelevant (and abstracted  with $\sliced$). 
 Finally, the only relevant constraints are those in $S$.

\begin{algorithm}
\KwIn{- a trace $\gamma_0\rediIdxJ{i_1}{k_1} \cdots
\rediIdxJ{i_{n}}{k_{n}}\gamma_n$ 
where $\gamma_i= (X_i;\Gamma_i; S_i)$

  \qquad\quad\ - a set $S \subseteq S_n$}
    
\KwOut{ a sliced trace $\gamma_0'\redi    \cdots \redi \gamma_n' $}

\Begin{
 {\bf let} $\theta = \emptyset$ {\bf in}
 
 $\gamma_n' \leftarrow ( X_n \cap vars(S) ; \sliced; S)$\;
 \For{l= $n-1$ to 0}{
  $\theta \leftarrow sliceProcess(\gamma_l, \gamma_{l+1}, i_{l+1}, k_{l+1},\theta, S) \circ \theta$\;
  $\gamma_l' \leftarrow (X_l\cap vars(S) ~; ~ \Gamma_l \theta ~;~ S_l \cap S)$
  } 
}
 \caption{Trace Slicer \label{alg:slicer}}
\end{algorithm}

The algorithm backwardly computes the slicing by accumulating 
replacing pairs in $\theta$. The new replacing substitutions are computed by the function 
$sliceProcess$ in 
Algorithm \ref{alg:proc}. 
Suppose that $\gamma \rediIdxJ{i}{k} \psi$. We consider each kind of process. For instance, assume a  $\rTellCS$ transition  $\gamma= (X_\gamma; \Gamma_1, \tellp{c}\idxP{i},\Gamma_2;S_\gamma) \rediIdxJ{i}{} (X_\psi;\Gamma_1,\Gamma_2;S_\psi)=\psi$. 
We note  that $X_\gamma \subseteq X_\psi$ and $S_\gamma \subseteq S_\psi$. 
We replace the constraint $c$ with its sliced version $c'$ computed 
by the function $sliceConstraints$. In that function, we compute the contribution of $\tellp{c}$ to the store, i.e., $S_c = S_\psi \setminus S_\gamma$. Then, any atom $c_a \in S_c$ not in the relevant set of constraints $S$ is 
replaced by $\sliced$. By joining together the resulting atoms, and existentially quantifying the variables in $X_\psi \setminus X_\gamma$ (if any), we obtain the sliced constraint $c'$.  In order to further simplify the trace, if $c'$ is $\sliced$ or $\exists \overline{x}. \sliced$ then we substitute $\tellp{c}$ with $\sliced$ (thus avoiding the ``irrelevant'' process $\tellp{\sliced}$).

 In a non-deterministic choice, all the precluded choices are discarded (`` $+~ \sliced$''). Moreover, if the chosen alternative $Q_k$   does not contribute to the final store (i.e., $\Gamma_Q\theta = \sliced$), then the whole process $\sum\whenp{c_l}{P_l}$ becomes $\sliced$. 
 
Consider the process $\localp{x}{Q}$. Note that $x$ may be replaced  
to avoid a clash of names (see $\rLocal'$).  The (new) created variable must be  
$\{x'\} = X_\psi \setminus X_\gamma$. Then, we check whether $\Gamma_Q[x'/x]$ is 
relevant or not  to return the appropriate replacement. 
 The case of procedure calls can be explained similarly. 

\begin{algorithm}[t]
\SetKwProg{Fn}{Function}{ }{end}
\Fn{sliceProcess($\gamma, \psi, i,k, \theta, S$) }{
  {\bf let} $\gamma=(X_\gamma ; \Gamma,P\idxP{i},\Gamma'; S_\gamma)$ and $\psi=(X_\psi ; \Gamma,\Gamma_Q, \Gamma' ; S_\psi)$ {\bf in}
    \SetKw{KMatch}{match}
  \SetKw{KWith}{with}
  \SetKwBlock{KBMatch}{\KMatch{$P$ \KWith }}{end}

  \KBMatch{
     \uCase{$\tellp{c}$}{
     {\bf let} $c' = sliceConstraints(X_\gamma,X_\psi, S_\gamma, S_\psi, S)$ {\bf in}
     
       \leIf{$c' = \sliced$ or $c' = \exists \overline{x}. \sliced$}{\KwRet{ $[\sliced / i]$}}{\KwRet{ $[\tellp{c'}/i]$}}
       
     } 
     \uCase{$\sum\whenp{c_l}{Q_l}$}{
     \leIf{$ \Gamma_Q \theta=\sliced$}{\KwRet{ $[\sliced / i]$}}{\KwRet{ 
     $[ \whenp{c_k}{(\Gamma_Q\theta)} + \sliced ~/~ i]$}}
     } 
     \uCase{$\localp{x}{Q}$}{
     {\bf let} $\{x'\} = X_\psi \setminus X_\gamma$ {\bf in}
     
       \leIf{$\Gamma_Q[x'/x] \theta=\sliced$}{\KwRet{ $[\sliced / i]$}}{\KwRet{ $[\localp{x'}{\Gamma_Q[x'/x]\theta}/ i]$}}
     } 
     \uCase(){$p(\vy)$}{
     
     	\leIf{$\Gamma_Q \theta=\sliced$}{\KwRet{ $[\sliced / i]$}}{\KwRet{ $\emptyset$}}

     } 
  }

} 

\SetKwProg{Fn}{Function}{ }{end}
\Fn{sliceConstraints($X_\gamma, X_\psi, S_\gamma, S_\psi, S$) }{
{\bf let} $S_c = S_\psi \setminus S_\gamma \mbox{ and } \theta= \emptyset$ {\bf in}

 \lForEach{$c_a \in S_c \setminus S$}{
   $\theta \leftarrow \theta \circ [\sliced / c_a]$
 }
 
 \KwRet{ $\exists_ {X_{\psi} \setminus X_{\gamma}}. \bigsqcup  S_c \theta$}
}

 \caption{Slicing Processes and Constraints \label{alg:proc}}
\end{algorithm}

\begin{example}\label{ex:trace}
Let $a,b,c,d,e$ be constraints without any entailment and consider the  process 
\resizebox{.9\textwidth}{!}{$
R= \whenp{a}{\tellp{c}} \parallel \whenp{c}{(\tellp{d} \parallel  
\tellp{b})} \parallel  \tellp{a} \parallel  \whenp{e}{\skipp}
$
}

In any execution of $R$, the final store is $\{a,b,c,d\}$.  If the user selects 
only $\{d\}$ as slicing criterion, our implementation (see Section \ref{sec:imp}) 
returns the 
following output (omitting the processes' identifiers):
\begin{Verbatim}[fontsize=\scriptsize]
[0; * || ask(c, tell(d) || *) || * || * || * ; *] --> 
[0; * || tell(d) || * || * || * || * ; *] --> 
[0; * || * || * || * || * || * ; d,*] --> 
[0; * || * || * || * || * || * ; d,*] --> stop
\end{Verbatim}

Note that only the relevant part of the process
$\whenp{c}{(\tellp{d} \parallel  \tellp{b})}$ is highlighted as well as the process $\tellp{d}$ 
that introduced $d$ in the final store. 

\end{example}

Also note that the process $P=\whenp{a}{\tellp{c}}$ is not selected in the trace since 
$c$ is not part of the marked store. However, one may be interested in marking this process to discover 
the \emph{causality  relation} 
between $P$ and $Q=\whenp{c}{(\tellp{d} \parallel  \tellp{b})}$. 
Namely,  $P$ adds $c$ to the store, needed in $Q$ to produce $d$.

It turns out that we can easily adapt  Algorithm \ref{alg:proc} to capture such causality relations as follows. 
Assume that $sliceProcess$ returns both, a  replacement $\theta$ and a constraint $c$, i.e., a tuple of 
the shape $\langle \theta, c\rangle$. 
In the case of $\sum\whenp{c_l}{P_l}$, 
if $\Gamma_Q \theta \neq \sliced$,  we return the pair $\langle [ 
\whenp{c_k}{\Gamma_k\theta} + \sliced ~/~ i], 
c_k\rangle$. In all the other cases, 
we return $\langle\theta,  \true\rangle$ where $\theta$ is as in   
Algorithm \ref{alg:proc}. Intuitively, the second component of the tuple represents the guard that 
was entailed in a ``relevant'' application of the rule $\rSumCS$. 
Therefore, in Algorithm \ref{alg:slicer}, besides  accumulating $\theta$, we 
add  the returned guard to  the set of relevant constraints $S$. This is done by replacing the line 5 in 
Algorithm \ref{alg:slicer} with \\

 \ \ \ ${\bf let} \langle \theta', c\rangle  =  
 sliceProcess(\gamma_l, \gamma_{l+1}, i_{l+1}, k_{l+1},\theta, S) \circ \theta\ \ 
 $ {\bf in}

 \ \  \ \ \ \ $\theta  \leftarrow \theta' \circ \theta$

 \ \  \ \ \ \ $S  \leftarrow  S \cup  S_{minimal}(S_{l},c) $
\ \\

where $S_{minimal}(S,c)=\emptyset$ if $c=\true$; otherwise,  $S_{minimal}(S,c) = \bigcup \{S' \subseteq S \mid  \bigsqcup S' \entails c 
\mbox{ and } S' \mbox{ is set minimal} \}$. 
Therefore,  we add to $S$ the minimal set of constraints in $S_k$ that ``explains'' the entailed guard $c$ of  an ask agent.

With this modified version of the algorithm (supporting \emph{causality relations}), the output for the program in Example \ref{ex:trace} is: 
\begin{Verbatim}[fontsize=\scriptsize]
[0 ; ask(a, tell(c)) || ask(c, tell(d) || *) || * || tell(a) || * ; *][3] 
\end{Verbatim}

where the process $\tellp{a}$ is also selected since the execution  of $\whenp{a}{\tellp{c}}$ depends on this process. \\

\noindent{\bf Soundness} 
We conclude here by showing that the slicing procedure 
computes a suitable approximation of the concrete trace. Given two processes $P, P'$, we say that $P'$ approximates $P$, notation 
$P \appr P'$, if there exists a (possibly empty) replacement $\theta$ s.t. $P' = P\theta$ (i.e., $P'$ is as $P$ but replacing some subterms with $\sliced$ ). 
Let $\gamma= (X ; \Gamma  ; S)$ and $\gamma'= (X' ;\Gamma' ; S')$ be two configurations s.t. $|\Gamma| =|\Gamma'|$. We say that $\gamma'$ approximates $\gamma$, notation 
$\gamma \appr \gamma'$, if $X' \subseteq X$, $S' \subseteq S$ and  $P_i \appr P_i'$ for all $i \in  1..|\Gamma|$.  
\begin{theorem}
Let $\gamma_0 \rediIdxJ{i_1}{k_1} \cdots \rediIdxJ{i_n}{k_n}\gamma_n$ be a partial computation and $\gamma'_0 \rediIdxJ{i_1}{k_1} \cdots \rediIdxJ{i_n}{k_n}\gamma'_n$ be the resulting sliced trace according to an arbitrary slicing criterion. Then,   for all $t\in 1..n$, $\gamma_t \appr \gamma_t'$. 
Moreover, let $Q=\sum \whenp{c_k}{P_k}$ and assume that $(X_{t-1}; \Gamma, Q\idxP{i_t}, \Gamma' ; S_{t-1})\rediIdxJ{i_t}{k_t} (X_t; \Gamma, P_{k_t}\idxP{j}, \Gamma'; S_t)$
for some $t \in 1..n$.   If the sliced trace is computed with 
the Algorithm that supports  causality relations,  then 
$\exists X'_{t-1} (\bigsqcup S'_{t-1}) \entails c_{k_t}$.
\end{theorem}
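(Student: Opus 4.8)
The plan is to establish both claims by a single backward induction that mirrors the loop of Algorithm~\ref{alg:slicer}, carrying along an invariant on the accumulated replacing substitution $\theta$.

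Call a replacing substitution $\theta$ \emph{sound} for the given trace if every pair $[T/i]\in\theta$ has $T\appr P$, where $P$ is the (unique) process carrying identifier $i$ along the trace, and every pair $[T/c]\in\theta$ has $T=\sliced$. Since each fresh identifier is created only once and a process keeps its identifier until it is reduced, ``the process carrying identifier $i$'' is well defined, and a replacement for $i$ introduced at iteration $l$ can affect only the configurations $\gamma_0,\dots,\gamma_l$ in which $i$ still occurs. I would prove the following lemma: for every $l\in\{n,\dots,0\}$, immediately after Algorithm~\ref{alg:slicer} writes $\gamma_l'$, the current substitution $\theta$ is sound, $S\subseteq S_n$ (using that the store is monotone along the trace, since only $\rTellCS$ changes it, and only by adding atoms), and $\gamma_l\appr\gamma_l'$. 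Part~1 of the theorem is exactly the last conjunct read over all $l$, because each $\gamma_l'$ is written once and never revisited.

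The base case $l=n$ is immediate: $\gamma_n'=(X_n\cap vars(S);\sliced;S)$, $\theta=\emptyset$ is vacuously sound, $X_n\cap vars(S)\subseteq X_n$, $S\subseteq S_n$ by hypothesis, and $\sliced$ approximates every process in $\Gamma_n$. For the inductive step, assume the lemma after iteration $l+1$ and consider iteration $l$, which treats the transition $\gamma_l\rediIdxJ{i_{l+1}}{k_{l+1}}\gamma_{l+1}$ reducing the process $P$ with identifier $i_{l+1}$ in $\Gamma_l$. The key point is that $sliceProcess$ returns a replacement that is sound for $P$; I would verify this by the case analysis of Algorithm~\ref{alg:proc}. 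For $\tellp{c}$, the correspondence between a constraint and its $atoms$-decomposition used in the proof of the Adequacy theorem shows that the $c'$ produced by $sliceConstraints$ is $c$ with some atomic conjuncts (those not relevant) replaced by $\sliced$, so $\tellp{c'}\appr\tellp{c}$; when $c'$ collapses to $\sliced$ or $\exists\overline{x}.\sliced$ the replacement $[\sliced/i_{l+1}]$ is trivially sound. For $\sum\whenp{c_l}{Q_l}$, the chosen branch $Q_{k_{l+1}}$ reappears as the residual sequence $\Gamma_Q$ in $\Gamma_{l+1}$, so by the inductively sound $\theta$ we get $\Gamma_Q\theta\appr Q_{k_{l+1}}$; hence $\whenp{c_{k_{l+1}}}{\Gamma_Q\theta}+\sliced$ approximates $\sum\whenp{c_l}{Q_l}$, and if $\Gamma_Q\theta=\sliced$ the whole sum is abstracted. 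For $\localp{x}{Q}$, the created variable $x'$ is the unique element of $X_\psi\setminus X_\gamma$, and modulo the $\alpha$-renaming $\localp{x}{Q}\equivP\localp{x'}{Q[x'/x]}$ the residual slice approximates $Q[x'/x]$. For $p(\overline{x})$, either the body was found irrelevant and the call is abstracted to $\sliced$, or nothing is recorded and the call is kept verbatim; both are sound. Granted this, the update $\theta\leftarrow r\circ\theta$ preserves soundness (old replacements still describe the same processes; the new $[T/i_{l+1}]$ matches $P$), the augmentation $S\leftarrow S\cup S_{minimal}(S_l,c)$ with $S_{minimal}(S_l,c)\subseteq S_l\subseteq S_n$ preserves $S\subseteq S_n$, and finally $\gamma_l'=(X_l\cap vars(S);\Gamma_l\theta;S_l\cap S)$ satisfies $\gamma_l\appr\gamma_l'$: the variable and store components shrink, $|\Gamma_l\theta|=|\Gamma_l|$ since a replacement substitutes one (possibly $\parallel$-composed) process for one, and componentwise each process of $\Gamma_l$ is either rewritten by a sound replacement in $\theta$ or left verbatim. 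This case enumeration together with the identifier bookkeeping is the technically delicate part of the proof; everything else is routine.

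For Part~2, fix the $\rSumCS$ step at position $t$ with $Q=\sum\whenp{c_k}{P_k}$; its side condition gives $\bigsqcup_{d\in S_{t-1}}d\entails c_{k_t}$, hence there is a set-minimal $S'\subseteq S_{t-1}$ with $\bigsqcup S'\entails c_{k_t}$, and $S'\subseteq S_{minimal}(S_{t-1},c_{k_t})$. In the causality-tracking variant $sliceProcess$ returns the entailed guard $c_{k_t}$ (the interesting case being $\Gamma_Q\theta\ne\sliced$, when the step is visible in the slice), and the main loop performs $S\leftarrow S\cup S_{minimal}(S_{t-1},c_{k_t})$ \emph{before} it writes $\gamma_{t-1}'=(X_{t-1}\cap vars(S);\Gamma_{t-1}\theta;S_{t-1}\cap S)$; hence $S'\subseteq S_{t-1}\cap S=S_{t-1}'$, and so $\bigsqcup S_{t-1}'\entails\bigsqcup S'\entails c_{k_t}$. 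The existentially quantified form $\exists X_{t-1}'(\bigsqcup S_{t-1}')\entails c_{k_t}$ of the statement then follows by the usual weakening, using monotonicity of the cylindrification operators (Definition~\ref{def:cs}). The main obstacle of the whole proof is not this last step but the identifier- and $\theta$-bookkeeping of Part~1.
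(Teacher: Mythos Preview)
The paper states this theorem without proof, so there is no argument in the text to compare against; I assess your proposal on its own merits.

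Your treatment of Part~1 is sound and is the natural structure for a proof: a backward induction mirroring the loop of Algorithm~\ref{alg:slicer}, with an explicit invariant on the accumulated replacing substitution $\theta$. The case analysis of $sliceProcess$ is correct, and you are right that the only delicate point is the identifier bookkeeping together with the observation that each replacement substitutes one (possibly $\parallel$-composite) term for one identifier, so sequence lengths are preserved.

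Your argument for Part~2 has a genuine gap in its final step. From $\bigsqcup S'_{t-1}\entails c_{k_t}$ you claim that $\exists X'_{t-1}.(\bigsqcup S'_{t-1})\entails c_{k_t}$ ``follows by the usual weakening, using monotonicity of the cylindrification operators''. This is the wrong direction: by axiom~(1) of Definition~\ref{def:cs} one has $\exists x(c)\leq c$, so existential quantification on the \emph{left} weakens the premise of an entailment and therefore does not preserve it. Monotonicity of $\exists$ only yields $\exists X'_{t-1}.(\bigsqcup S'_{t-1})\entails \exists X'_{t-1}.c_{k_t}$, which is not what is claimed. To close the gap you need an additional argument, for instance that the variables in $X'_{t-1}$ do not occur free in $c_{k_t}$; this is not immediate, since $X'_{t-1}\subseteq X_{t-1}$ consists of local variables opened before step $t$, and the guard $c_{k_t}$ can mention such a variable whenever the summation originated inside a $\localp{x}{\cdot}$ already unfolded. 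A secondary, smaller gap: you dispose of the case $\Gamma_Q\theta=\sliced$ as ``not interesting'', but the theorem as stated does not restrict to visible steps; in that case the causality-tracking variant returns $\true$ and adds nothing to $S$, so the conclusion need not hold. You should either handle that case explicitly or note that the statement tacitly assumes the step survives the slice.
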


\makeatletter{}
\section{Applications to Timed CCP}\label{sectionapplications}
Reactive systems \cite{BeGo92} are those that react continuously with
their environment at a  rate controlled by the environment. 
For example, a controller or a  signal-processing system, receives a
stimulus (input) from the environment, computes an output  and
then waits for the next interaction with the environment.

Timed \ccp\   (\TCC) \cite{DBLP:journals/jsc/SaraswatJG96,NPV02} 
is an extension of \ccp\ tailoring ideas from  Synchronous Languages
\cite{BeGo92}. More precisely, 
 time in \TCC\ is  conceptually divided into \emph{time
intervals }(or \emph{time-units}). In a particular time
interval, a \ccp\ process $P$  gets an input  $c$
from the environment, it executes with this input as the initial
\emph{store}, and when it reaches
its resting point, it \emph{outputs} the resulting store $d$ to the
environment. The resting point determines also a residual process $Q$
that is then executed in the next time-unit. The resulting store $d$
is not automatically transferred to the next time-unit.
This way, 
outputs of two different time-units are not supposed to be related.

\begin{definition}[Syntax of \TCC\
\cite{DBLP:journals/jsc/SaraswatJG96,NPV02}]\label{tcc:syntax} 
The syntax of \TCC\ is obtained by adding to Definition
\ref{def:syntax-lcc} the processes 
$
\nextp{P} \mid \unlessp{c}{P} \mid \ \bangp{P}
$. 
\end{definition}

The process $\nextp{P}$ delays the execution of $P$ to the next time
interval. 
We shall use  $\nextp\!^n{P}$ to denote $P$ preceded with $n$ copies
of ``$\nextp$'' 
and $\nextp\!^0{P}=P$. 

The \emph{time-out} \( \unlessp{c}{P} \) is also a unit-delay, but
\( P \)  is executed in the next time-unit only if  \( c \) is not
entailed by the final store at the current time interval.

The replication $\bangp{P}$ means  \( P\parallel
\mathbf{next}\,P\parallel \mathbf{next}^{2}P\parallel
\dots\), i.e., unboundedly many copies of \( P \) but one at a time.
We note that in \TCC, recursive calls must be guarded by a {\bf next}
operator to avoid infinite computations during a time-unit. Then,
recursive definitions can be encoded via the  $\bangp{}$ operator
\cite{DBLP:conf/ppdp/NielsenPV02}.

The operational semantics of \TCC\ considers \emph{internal} and
\emph{observable} transitions. The internal transitions 
correspond to the operational steps that take place during a
time-unit. The rules are the same as  in Figure \ref{fig:colsem} plus: \\

\resizebox{.6\textwidth}{!}{
$
\begin{array}{c}
\infer[\rUnless]{(X ; \Gamma, \unlessp{c}{P}\idxP{i},\Gamma' ; S) \rediIdx{i} (X; \Gamma,\Gamma' ; S)}{\bigsqcup S \entails c}
\\\\
\infer[\rBang]{(X ; \Gamma, \bangp{P}, \Gamma'; S) \rediIdx{i} (X ; \Gamma, P\idxP{j} , 
\nextp\bangp{P}\idxP{j'}, \Gamma';S)}{}
\end{array}
$
}
\ \\

where $j$ and $j'$ are fresh identifiers. 
The $\mathbf{unless}$ process is precluded from execution  if its guard can
be entailed from the current store. 
The process $\bangp{P}$ creates a copy of $P$ in the current
time-unit and it is executed in the next time-unit. 
 The seemingly missing rule for the $\mathbf{next}$ operator is 
clarified below. 
 
 The \emph{observable transition} $P \rede{(c,d)} Q $ (``\( P \)
on input \( c \), reduces
in one \emph{time-unit} to \( Q \) and outputs \( d \)'') is  obtained from a finite sequence  of internal reductions: 
\[
\infer[\rObserv]{\Gamma \rede{(c,\exists X. c')}  \localp{X}
F(\Gamma')  }{(\emptyset; \Gamma ; c) \rediIdxJ{i_1,...,i_n}{k_1,...,k_n} (X; \Gamma' ;
c')\not\redi}
\]

The process $F(\Gamma')$ (the continuation of $\Gamma'$) is obtained
as follow:

${
F(R)=\left\{
\begin{array}{ll}
 \skipp & \mbox{ if $R=\skipp$ or $R=\whenp{c}{R'}$ }  \\
 F(R_1) \parallel F(R_2) & \mbox{ if $R = R_1 \parallel R_2$} \\
 Q & \mbox{ if $R=\nextp{Q}$ or $R=\unlessp{c}{Q}$ } 
\end{array}
\right.
}
$

The function $F(R)$ (the future of $R$) 
returns the processes that must be executed in the next time-unit. More precisely, 
it unfolds \emph{next} and
$\emph{unless}$ expressions. Notice that an \emph{ask} process
reduces to $\skipp$ if its  guard was not entailed by the final
store. Notice also  that $F$ is not defined for $\tellp{c}$,
$\bangp{Q}$, $\localp{x}{P}$ or $p(\vx)$ processes since all of them
give rise to an internal transition. 
Hence these processes can only appear in the continuation if they 
occur within a $\mathbf{next}$ or $\mathbf{unless}$ expression.

\subsection{A trace Slicer for \TCC} \label{sec:imp}
From the execution point of view, only the observable transition is
relevant since it describes
the input-output behavior of processes. 
However, when a \TCC\ program is debugged, we have to consider also
the internal transitions. This makes the task of debugging even
harder when compared to \ccp. 

We implemented in Maude (\url{http://maude.cs.illinois.edu}) a
prototypical version of a slicer for \TCC\ (and then for \ccp) that
can be found at \url{http://subsell.logic.at/slicer/}. 

The slicing technique for the internal transition is based on the
Algorithm \ref{alg:slicer}
by  adding the following cases to Algorithm \ref{alg:proc}:
 
 \begin{algorithm}[H]
 \lCase{$\unlessp{c}{Q}$}{
     {\KwRet{ $[\sliced/i]$}}  
     }
 \uCase{$\bangp{Q}$}{
     \leIf{$\Gamma_Q \theta=\sliced\ $}{\KwRet{ $[\sliced / i]$}}{\KwRet{
$[\bangp{(Q\theta)}/i]$}}}
 \end{algorithm} 
 \ \\
 Note that if an {\bf unless} process evolves during a time-unit, then
it is irrelevant. In the case of $\bangp{P}$, 
we note that $\Gamma_Q = Q\idxP{j}, \nextp\bangp{Q}\idxP{j'}$. 
We check whether $P$ is
relevant in the current time-unit ($Q$) or in the following one
($\nextp\bangp Q$). If this is not the case, then $\bangp{Q}$
is irrelevant. 

Recall that {\bf next} processes do not exhibit any
transition during a time-unit and then, we do not consider this case
in the  extended version of Algorithm \ref{alg:proc}. 

 For the observable transition we proceed as follows. Consider a
trace of $n$ observable steps $\gamma_0 \rede{} \cdots \rede{}
\gamma_n$ and a set $S_{slice}$ of relevant constraints to be
observed in the last configuration $\gamma_n$. Let $\theta_n$   be
the replacement computed during the slicing process of the
(internal) trace generated from $\gamma_n$.   We propagate the
 replacements in $\theta_n$  to the configuration $\gamma_{n-1}$ as
follows:
\begin{enumerate}
 \item In $\gamma_{n-1}$ we set $S_{sliced} = \emptyset$. Note that
the unique store of interest for the user is the one in $\gamma_n$.
Recall also that the final store in \TCC\ is not transferred to the
next time-unit. Then, only the processes (and not the constraints) in
$\gamma_{n-1}$  are responsible for the final store in $\gamma_n$. 
 \item 
 Let $\psi$ be the last internal configuration in $\gamma_{n-1}$,
i.e., $\gamma_{n-1} \rediIdxJ{i_1,...,i_m}{k_1,...,k_m} \psi \not\redi$ and $\gamma_n = F(\psi)$. 
 We propagate the replacements in $\theta_n$ to 
 $\psi$ before running the slicer on the trace  starting from
$\gamma_{n-1}$. For that, we compute a replacement $\theta'$ that
must be applied to $\psi$ as follows:
 \begin{itemize}
  \item If there is a process $R = \nextp{P}\idxP{i}$ in $\psi$, then
$\theta'$ includes the replacement 
$[\nextp{(\Gamma_P\theta_n)}/i]$. For instance, if 
  $R = \nextp{(\tellp{c}\parallel \tellp{d})}$  and $\tellp{c}$ was
irrelevant in $\gamma_n$, the resulting process in $\psi$ is 
 $\nextp{( \sliced \parallel \tellp{d})}$ . The case
for  $\unlessp{c}{P}$ is similar. 
 \item If there is a process $R =\sum_l\whenp{c_l}{P_l} \idxP{i}$ in $\psi$ (which is
irrelevant since it was not executed), we  add to $\theta'$ the
 replacement $[\sliced / i]$. 
 \end{itemize}
 \item 
 Starting from $\psi \theta$, we compute the slicing on
$\gamma_{n-1}$ (Algorithm \ref{alg:slicer}).
 
\item  This procedure continues until the first configuration  $\gamma_0$ is reached. 
\end{enumerate}


\begin{example}
Consider the following process definitions:
\[
\begin{array}{lll}
System \defsymbol  Beat2 \parallel Beat4 \qquad\qquad
Beat2 \defsymbol \tellp{b2} \parallel \nextp^2 \ Beat2
\\
Beat4 \defsymbol \tellp{b4} \parallel \nextp^4 \ Beat4
\end{array}
\]
This is a simple model of a multimedia system that, every 2 (resp. 4) time-units, produces the constraint $b2$ (resp. $b4$). Then, 
every 4 time-units, the system produces
both $b2$ and $b4$. 
  If we compute  5 time-units and choose $S_{slice}=\{b4\}$ we obtain (omitting the process identifiers):
\begin{Verbatim}[fontsize=\scriptsize]
{1 / 5 > [System ; *] -->  [Beat4 ; *] --> [next^4(Beat4) ; *]} ==> 
{2 / 5 > [next^3(Beat4) ; *]} ==> 
{3 / 5 > [next^2(Beat4) ; *] } ==> 
{4 / 5 > [next(Beat4) ; *]} ==> 
{5 / 5 > [Beat4 ; *] --> [tell(b4) || * ; *] --> [* ; b4]}
\end{Verbatim}
Note that all the executions of $Beat2$ in time-units 1, 3 and 5 are
hidden  since they do not contribute to the observed output $b4$.
More interestingly, the execution of $\tellp{b4}$ in time-unit 1, as
well as the recursive call of $Beat4$ ($\nextp^4 \ Beat4$) in
time-unit 5, are also hidden. 

Now assume that we compute an even number of time-units. Then, no
constraint is produced in that time-unit and the whole execution of
$System$ is hidden:
\begin{Verbatim}[fontsize=\scriptsize]
{1/4 > [* ; *]} ==> {2/4 > [* ; *]} ==> 
{3/4 > [* ; *]} ==> {4/4 > [* ; *]}
\end{Verbatim}

As a more compelling example, consider the following process
definitions:

\resizebox{.95\textwidth}{!}{
$
\begin{array}{lcl}
Beat   \defsymbol   \prod\limits_{i\in I_1}\nextp^i{\tellp{\beatc} }
& \qquad&
Start  \defsymbol   \sum\limits_{i\in I_2} \nextp^i(\tellp{\startc})
\\
Check   \defsymbol  \bangp
\whenp{\startc}{\nextp^{12}(\tellp{\stopc}}) 
& \qquad&
System   \defsymbol  Beat \parallel Start \parallel Check
\end{array}
$}
\ \\

where $I_1 = \{0,3,5,7,9,11,14,16,18,20,22\}$, $I_2=\{0,3,5,7,9,11\}$ and $\Pi_i$ stands for parallel composition. 
This process represents a rhythmic pattern where groups of ``$2$''-unit elements separate groups of ``$3$''-unit elements, e.g., $
3\ \underbrace{\ 2\ 2\ 2\ 2\ }\ 3\ \underbrace{\ 2\ 2\ 2\ 2 \ 2}$. Such pattern appears in repertoires of Central African Republic music \cite{chemillier-book} and were programmed in \TCC\ in \cite{cp-music}. 

This pattern can be  represented in a circle with $24$ divisions,
where ``$2$'' and ``$3$''-unit elements are placed. The ``$3$''-unit intervals are displayed in red in Figure
\ref{fig:circ}. The important property is \emph{asymmetry}: 
if one attempts to break the circle into two parts, it is not
possible to have two equal parts. To be more precise, the $\startc$
and $\stopc$ constraints divide the circle in two halves (see process Start) and
it is always the case that the constraint  $\beatc$ does not coincide
in a time-unit with the constraint $\stopc$. For instance, in Figure \ref{fig:circles} (a) (resp. (b)), the circle is divided in time-units 1 --start-- to 13 --stop-- (resp. 4 --start-- to 16 --stop--).  The signal $\beatc$ does not coincide with a $\stopc$: in Figure  \ref{fig:circles} (a) (resp. (b)), the $\beatc$ is added in time-unit 12 (resp. 15). 

If we generate one of the possible traces and perform the slicing processes  for the time-unit 13 with
$S_{sliced}=\{\beatc, \stopc\}$, we only observe  as relevant process
$Check$ (since no $\beatc$ is produced in that time-unit) :
\begin{Verbatim}[fontsize=\scriptsize]
{1 / 13 > [System ; *] --> [Check ; *] --> [! ask(start, next^12(tell(stop)) ; *] 
          --> [ask(start, next^12(tell(stop)) ; *] --> [next^12(tell(stop) ; *]} ==> 
.... ==> ...
{11 / 13 > [next(next(tell(stop))) ; *]} ==> 
{12 / 13 > [next(tell(stop)) || * ; *]} ==> 
{13 / 13 > [tell(stop) ; *] --> [* ; stop][0]}
\end{Verbatim}
More interestingly, assume that we wrongly write a process $Check$ that
is not ``well synchronized'' with the process $Beat$. For instance,
let  $I_2'=\{2\}$. In this case,  the $\startc$ signal does not coincide with a
$\beatc$.  Then, in time-unit 15, we (wrongly) observe both $\beatc$ and
$\stopc$ (i.e., asymmetry  is broken!). The trace of that program (that can be found in tool's web
page) is quite long and difficult to understand. On the contrary, the
sliced one is rather simple:
\begin{Verbatim}[fontsize=\scriptsize]
{1 / 15 > [System ; *] --> [Beat || Check ; *] --> 
          [next^14(tell(beat) || next(! ask(start, next^12(tell(stop)); *]} ==> 
{2 / 15 > [next^13(tell(beat))|| ! ask(start, next^12(tell(stop))) ; *]} ==> 
{3 / 15 > [next^12(tell(beat)))|| ! ask(start, next^12(tell(stop)) ; *]} ==> 
{4 / 15 > [next^11(tell(beat))|| next^11(tell(stop)|| * ; *] --> stop} ==> 
...
{14 / 15 > [next(tell(beat)) || next(tell(stop)) || * ; *] --> stop} ==> 
{15 / 15 > [tell(beat) || tell(stop) || * ; *] --> [tell(stop) || * ; beat] --> 
           [* ; beat,stop]}
\end{Verbatim}
Something interesting in this trace is that the {\bf ask} in the $Check$ process  is hidden from the time-unit 4 on (since it is not ``needed'' any more). Moreover, the only $\tellp{\beatc}$ process (from $Beat$ definition) displayed is the one that is executed in time-unit 15 (i.e., the one resulting from $\nextp^{14}{\tellp{\beatc}}$). From this trace, it is not difficult to note that the $Start$ process starts on time-unit 3 (the process $\nextp^{11}{\tellp{\stopc}}$ first appears on time-unit $4$). This can  tell the user that the process $Start$ begins its execution in a wrong time-unit. In order to confirm this hypothesis, the user may compute the sliced trace up to time-unit 3 with  
$S_{sliced}=\{\beatc, \startc\}$ and notice that, in that time-unit, $\startc$ is produced but $\beatc$ is not part of the store. 
\end{example}

The reader may find in the web page of the tool a further example related to biochemical systems. 
We modeled in \TCC\ the P53/Mdm2 DNA-damage Repair Mechanism \cite{DBLP:conf/fmmb/MariaDF14}. The slicer allowed us to detect two bugs in the written code. We invite the reader to check in this example the length (and complexity)  of the buggy trace and the resulting sliced trace. 

\begin{figure}[t]
\centering
\begin{tabular}{cc}
\includegraphics[width=5.5cm]{./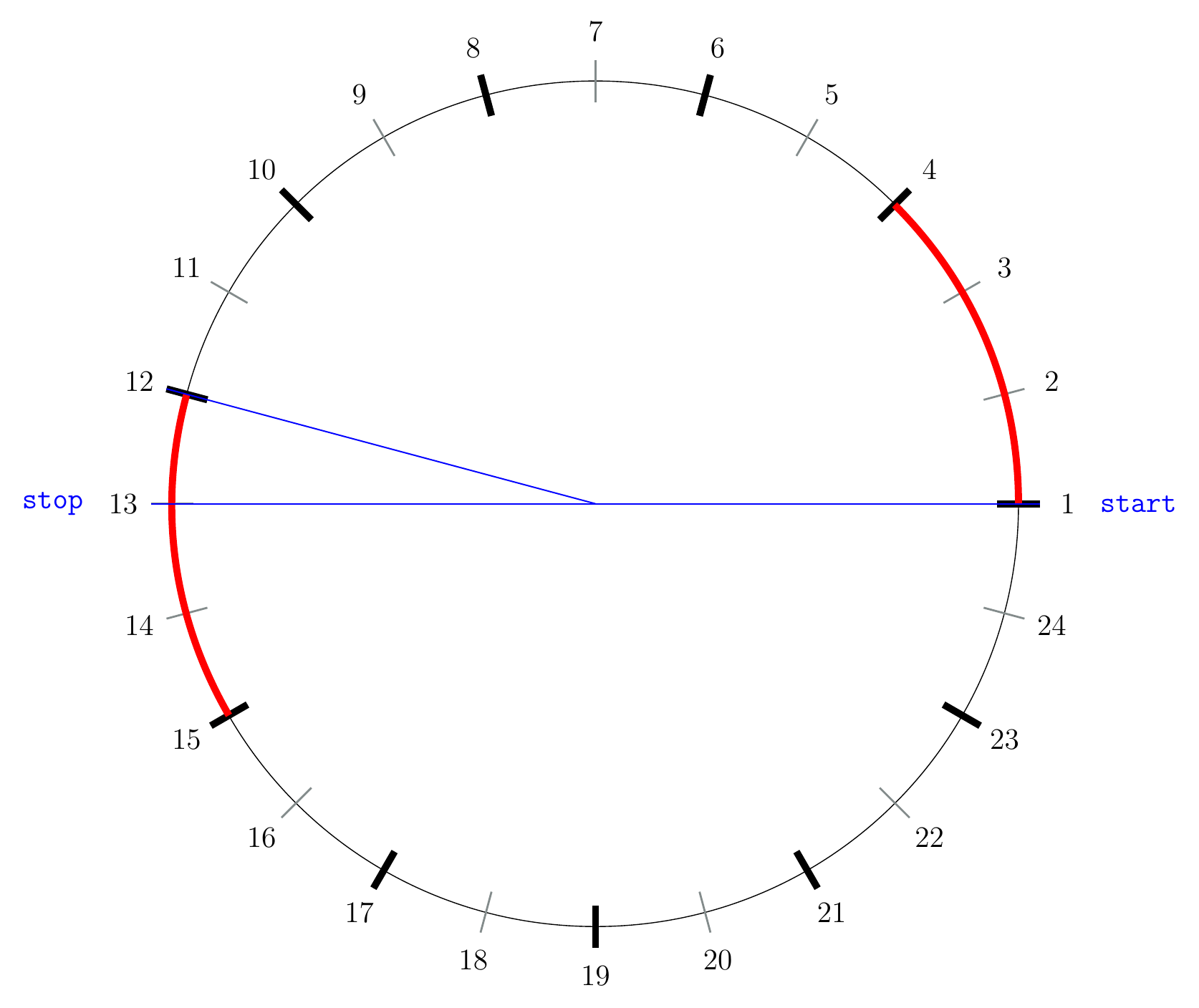} &
\includegraphics[width=4.6cm]{./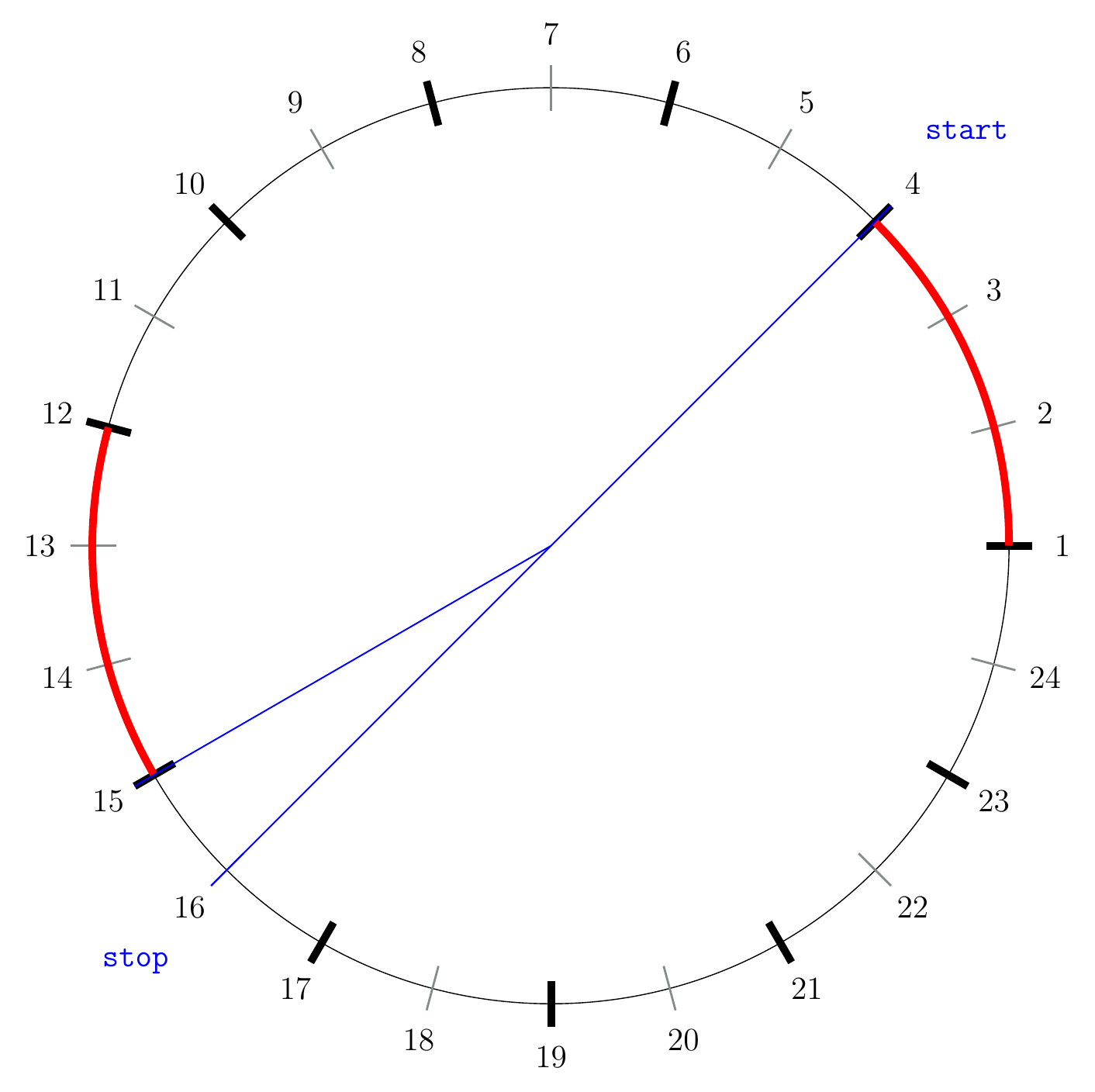} \\
(a) & (b)
\end{tabular}
\caption{\label{fig:circ}Pattern of ``$2$'' and ``$3$''-unit elements
(taken from \cite{chemillier-book}).  \label{fig:circles}}
\end{figure}

\makeatletter{}
\section{Conclusions and future work}\label{sectionconclusions}

In this paper we introduced the first framework for slicing
 concurrent constraint  based programs, and showed its applicability for CCP and timed CCP.  We 
  implemented a  prototype of the slicer in Maude
and showed its use in debugging a program specifying  a biochemical system and a multimedia interacting system.

 Our framework is a good basis for dealing with other variants of \ccp\
such as  linear \ccp\ \cite{fages01ic},  spatial and epistemic \ccp\
\cite{DBLP:conf/concur/KnightPPV12} as well as with other temporal
extensions of it \cite{DBLP:journals/iandc/BoerGM00}. We are 
currently working on extending our tool to cope with these languages. 
We  also plan to incorporate into our framework an assertion language based on a suitable fragment of temporal logic. Such assertions will specify invariants the program must satisfy  during its execution. 
If the assertion is not satisfied in a given state, then the execution is interrupted and a concrete trace is generated to be later  sliced. 
For instance,  in the multimedia system,  the user may specify the invariant $\stopc \to  (\neg \beatc)$ (if $\stopc$ is entailed then $\beatc$ cannot be part of the store) or $\stopc \to \past \beatc$ (a $\stopc$ state must be preceded by a $\beatc$ state).


\noindent {\em\bf Acknowledgments}. We thank the anonymous reviewers for 
their detailed comments and suggestions which helped us to improve our paper. 
The work of Olarte was funded by CNPq (Brazil). 


\bibliographystyle{plain}

\begin{thebibliography}{10}

\bibitem{ABER11}
M.~Alpuente, D.~Ballis, J.~Espert, and D.~Romero.
\newblock Backward trace slicing for rewriting logic theories.
\newblock In {\em Proc. of CADE'11}, pages 34--48, Berlin, Heidelberg, 2011.
  Springer-Verlag.

\bibitem{ABFR14}
M.~Alpuente, D.~Ballis, F.~Frechina, and D.~Romero.
\newblock Using conditional trace slicing for improving maude programs.
\newblock {\em Sci. Comput. Program.}, 80:385--415, 2014.

\bibitem{BeGo92}
G.~Berry and G.~Gonthier.
\newblock The {{\sc Esterel}} synchronous programming language: Design,
  semantics, implementation.
\newblock {\em Science of Computer Programming}, 19(2):87--152, 1992.

\bibitem{bortolussi}
L.~Bortolussi and A.~Policriti.
\newblock Modeling biological systems in stochastic concurrent constraint
  programming.
\newblock {\em Constraints}, 13(1-2):66--90, 2008.

\bibitem{chemillier-book}
M.~Chemillier.
\newblock {\em Les Math\'ematiques Naturelles}.
\newblock Odile Jacob, 2007.

\bibitem{CFM94}
M.~Codish, M.~Falaschi, and K.~Marriott.
\newblock Suspension {A}nalyses for {C}oncurrent {L}ogic {P}rograms.
\newblock {\em ACM Transactions on Programming Languages and Systems},
  16(3):649--686, 1994.

\bibitem{CominiTV11absdiag}
M.~Comini, L.~Titolo, and A.~Villanueva.
\newblock {Abstract Diagnosis for Timed Concurrent Constraint programs}.
\newblock {\em Theory and Practice of Logic Programming}, 11(4-5):487--502,
  2011.

\bibitem{DBLP:journals/iandc/BoerGM00}
F.~S. de~Boer, M.~Gabbrielli, and M.~C. Meo.
\newblock A timed concurrent constraint language.
\newblock {\em Inf. Comput.}, 161(1):45--83, 2000.

\bibitem{BoerPP95}
F.~S. de~Boer, A.~Di Pierro, and C.~Palamidessi.
\newblock Nondeterminism and infinite computations in constraint programming.
\newblock {\em Theoretical Computer Science}, 151(1):37--78, 1995.

\bibitem{fages01ic}
F.~Fages, P.~Ruet, and S.~Soliman.
\newblock Linear concurrent constraint programming: Operational and phase
  semantics.
\newblock {\em Inf. Comput.}, 165(1):14--41, 2001.

\bibitem{DBLP:journals/tplp/FalaschiOP15}
M.~Falaschi, C.~Olarte, and C.~Palamidessi.
\newblock Abstract interpretation of temporal concurrent constraint programs.
\newblock {\em {TPLP}}, 15(3):312--357, 2015.

\bibitem{HentenryckSD98}
P.~Van Hentenryck, V.~A. Saraswat, and Y.~Deville.
\newblock Design, implementation, and evaluation of the constraint language
  cc(fd).
\newblock {\em Journal of Logic Programming}, 37(1-3):139--164, 1998.

\bibitem{Silva2012}
S.~Josep.
\newblock A vocabulary of program slicing-based techniques.
\newblock {\em ACM Comput. Surv.}, 44(3):12:1--12:41, June 2012.

\bibitem{DBLP:conf/concur/KnightPPV12}
S.~Knight, C.~Palamidessi, P.~Panangaden, and F.~D. Valencia.
\newblock Spatial and epistemic modalities in constraint-based process calculi.
\newblock In M.~Koutny and I.~Ulidowski, editors, {\em CONCUR}, volume 7454 of
  {\em LNCS}, pages 317--332. Springer, 2012.

\bibitem{KL88}
B.~Korel and J.~Laski.
\newblock Dynamic program slicing.
\newblock {\em Inf. Process. Lett.}, 29(3):155--163, 1988.

\bibitem{DBLP:conf/fmmb/MariaDF14}
E.~De Maria, J.~Despeyroux, and A.~P. Felty.
\newblock A logical framework for systems biology.
\newblock In F.~Fages and C.~Piazza, editors, {\em {FMMB}}, volume 8738 of {\em
  LNCS}, pages 136--155. Springer, 2014.

\bibitem{DBLP:conf/ppdp/NielsenPV02}
M.~Nielsen, C.~Palamidessi, and F.~D. Valencia.
\newblock On the expressive power of temporal concurrent constraint program.
  languages.
\newblock In {\em Proc. of PPDP'02}, pages 156--167. {ACM}, 2002.

\bibitem{NPV02}
M.~Nielsen, C.~Palamidessi, and F.~D. Valencia.
\newblock Temporal concurrent constraint programming: Denotation, logic and
  applications.
\newblock {\em Nord. J. Comput.}, 9(1):145--188, 2002.

\bibitem{OSV08}
C.~Ochoa, J.~Silva, and G.~Vidal.
\newblock Dynamic slicing of lazy functional programs based on redex trails.
\newblock {\em Higher Order Symbol. Comput.}, 21(1-2):147--192, June 2008.

\bibitem{DBLP:journals/tcs/OlartePN15}
C.~Olarte, E.~Pimentel, and V.~Nigam.
\newblock Subexponential concurrent constraint programming.
\newblock {\em Theor. Comput. Sci.}, 606:98--120, 2015.

\bibitem{cp-music}
C.~Olarte, C.~Rueda, G.~Sarria, M.~Toro, and F.~D. Valencia.
\newblock Concurrent constraints models of music interaction.
\newblock In G.~Assayag and C.~Truchet, editors, {\em Constraint Programming in
  Music}, pages 133--153. Wiley, 2011.

\bibitem{DBLP:journals/constraints/OlarteRV13}
C.~Olarte, C.~Rueda, and F.~D. Valencia.
\newblock Models and emerging trends of concurrent constraint programming.
\newblock {\em Constraints}, 18(4):535--578, 2013.

\bibitem{Olarte:08:SAC}
C.~Olarte and F.~D. Valencia.
\newblock Universal concurrent constraint programing: symbolic semantics and
  applications to security.
\newblock In R.~L. Wainwright and H.~Haddad, editors, {\em SAC}, pages
  145--150. ACM, 2008.

\bibitem{cp-book}
V.~A. Saraswat.
\newblock {\em Concurrent Constraint Programming}.
\newblock MIT Press, 1993.

\bibitem{DBLP:journals/jsc/SaraswatJG96}
V.~A. Saraswat, R.~Jagadeesan, and V.~Gupta.
\newblock Timed default concurrent constraint programming.
\newblock {\em J. Symb. Comput.}, 22(5/6):475--520, 1996.

\bibitem{saraswat91popl}
V.~A. Saraswat, M.~C. Rinard, and P.~Panangaden.
\newblock Semantic foundations of concurrent constraint programming.
\newblock In D.~S. Wise, editor, {\em POPL}, pages 333--352. ACM Press, 1991.

\bibitem{Shapiro83}
E.~Y. Shapiro.
\newblock {\em Algorithmic Program DeBugging}.
\newblock MIT Press, 1983.

\bibitem{MW84}
M.~Weiser.
\newblock Program slicing.
\newblock {\em IEEE Trans. on Software Engineering}, 10(4):352--357, 1984.

\end{thebibliography}


\end{document}